\newtheorem{theorem}{Theorem}
\begin{document}

\title{Dynamic Content Updates in Heterogeneous Wireless Networks 
}
\author[1]{Mehdi Salehi Heydar Abad \thanks{This work was in part supported by EC H2020-MSCA-RISE-2015 programme under grant number 690893.}}
\author[2]{Emre Ozfatura}
\author[1]{Ozgur Ercetin}
\author[2]{Deniz G\"und\"uz}
\affil[1]{Faculty of Engineering
and Natural Sciences, Sabanci University}
\affil[2]{Department of Electrical and Electronic Engineering
Imperial College London}
\affil[1] {\textit{\{mehdis,oercetin\}@sabanciuniv.edu}}
\affil[2] {\textit{\{m.ozfatura,d.gunduz\}@imperial.ac.uk}}
\maketitle

\newtheorem{lemma}{Lemma}
\newtheorem{corollary}{Corollary}
\thispagestyle{empty}

\begin{abstract}
Content storage at the network edge is a promising solution to mitigate the excessive traffic load due to on-demand streaming applications as well as to reduce the streaming delay. To this end, cache-enabled cellular architectures can be utilized to increase the provided quality-of-service (QoS) and to reduce the network cost. However, there are certain issues to be considered in the design of the content storage strategy such that the contents should be refreshed in order to responds user`s expectations. Using a frequent cache refreshment strategy the ratio of satisfied users can be increased at an increasing network cost. In this paper, we introduce a cache refreshment strategy via leveraging learning techniques so that users' tolerance to the age of content is learned and the content is refreshed accordingly.  
\end{abstract}

\begin{IEEEkeywords}
Content caching, content refreshment, Markov decision process (MDP), quality of service, multi-armed bandit (MAB)
\end{IEEEkeywords}

\section{Introduction}

While proactive content caching has received significant interest in the recent years, most of the existing strategies in the literature (both with uncoded \cite{SC.CoUcCD1,SC.CoUcCD2,SC.CoUcCD3} and coded placement \cite{SC.CoCC1,SC.M3}) have been designed under the assumption that content popularities are known in advance. Although it is possible to observe the global popularity of contents in on-demand video streaming services, such as YouTube \cite{vs1},  small-cell base stations (SBSs) usually serve a small geographical area, where the local content popularity might not be aligned with the global popularity \cite{vs2}. This mismatch between the local and global content popularities requires the design of predictive caching policies that aim to learn the local content popularity from the user requests.
Predictive caching policies can be classified into two main groups, namely  {\em predictive caching with unknown popularities} \cite{SC.PCUP1, Blasco:ISIT:14, SC.PCUP2} and {\em predictive caching with time-varying popularities} \cite{SC.PCTP1,SC.PCTP2, Bharath:TC:18}. In \cite{SC.PCUP1, Blasco:ISIT:14}, the authors focus on a single SBS and model the predictive caching problem as a multi-arm bandit problem, in which the received user requests are utilized to predict the file popularities, and the optimal caching strategy is obtained by taking into account the cost of file replacements. In \cite{SC.PCUP2},  this approach has been extended to a cooperative caching framework, where the SBSs fetch the requested content from the neighboring SBSs if the corresponding  file is cached there. Another strategy for predictive caching follows a user-centric approach, possibly implemented at a higher layer, and records user requests for different contents as a matrix. Future user requests are predicted  using matrix completion techniques by exploiting the correlations among the requests for different files, similarly to recommendation systems \cite{SC.PCUP_MC}. In \cite{SC.PCTP1}, a cache replacement strategy  has been introduced for time-varying popularity scenario to maximize the local service rate with a minimum replacement cost, while a more theoretical approach is taken in \cite{Bharath:TC:18}, which studies the cache update policy in the case of time-varying content popularities. In parallel to the aforementioned works, another relevant research direction is {\em contextual predictive caching}, where various features of the requests, such as genre of the video file, age of the user, or the time of the requests, are utilized to predict future requests and shape the caching strategy accordingly \cite{SC.PC_cont1,SC.PC_cont2}.
Although the predictive caching framework is highly effective to increase the efficiency of edge caching, there are certain limitations. Most of the aforementioned predictive caching strategies are designed to predict only the content popularity. However, in many applications, e.g., news, weather, etc., freshness of the content is another important factor for the user satisfaction. Content caching and refreshment problem has been previously studied in  \cite{fresh1, fresh2}. In this paper, we consider a heterogeneous cellular network with cache enabled SBSs and provide a cost aware content update policy. The proposed content update strategy consists of two parts;  in the first part, we show that the structure of the optimal periodic content update policy that minimizes the network cost for given users' tolerance to the age of the content is of threshold type, and in the second part, we utilize the multi-armed bandit approach to learn users' tolerance to the age of  the content. To the best of our knowledge, ours is the first work that utilizes the learning framework to analyze the user behavior based on the content age. Accordingly, we design an online cache refreshment policy to minimize the overall network cost.

\begin{figure}
\centering
\includegraphics[scale=0.3]{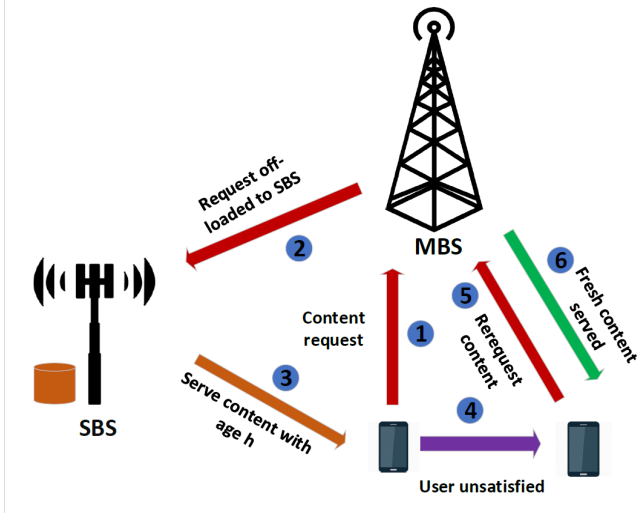}
\label{fig:sysmod}
\caption{User requests are first off-loaded to the SBS; however, users may become unsatisfied with the freshness of the contents they receive from the SBS, and consequently, such requests are re-directed to the MBS.}
\end{figure}
\section{System model and  problem formulation}

We consider a cellular network with a macro base station (MBS) and a SBS serving the users in a cell. It is assumed that both the MBS and the SBS are equipped with cache memories, storing a library of $N$ distinct dynamic contents, denoted by $S_{1}, \ldots, S_{N}$. We note that each dynamic content has a different popularity, i.e., the probability that a user requests content $S_{n}$ is $p_{n}$, for $n=1\ldots,N$. 

\subsection{Content freshness}
Dynamic contents, such as news videos, traffic and weather updates, may change frequently over time. While we assume that the MBS always has the fresh content updates, thanks to its relatively higher-bandwidth connection to the content server in the core network, the SBS needs to regularly refresh the dynamic contents in its cache to keep them up-to-date. For the SBS, downloading all the contents from the MBS through its limited backhaul link is costly in terms of energy, time and spectrum.

We consider a discrete time system model with equal-length time slots. At the beginning of each time slot the SBS decides on which contents to be updated. We assume that a content is updated at the beginning of the next time slot. The corresponding decision vector is denoted by  $\mathbf{d}(t)= \left(d_{1},\ldots,d_{N}\right)$, i.e., when the application $S_{n}$ is updated at the end of time slot $t$, $d_{n}(t)=1$, and  $d_{n}(t)=0$ otherwise. 

We denote by $h_{n}(t)$ the {\em age} of the content $S_{n}$ in the SBS cache at time slot $t$. We assume a maximum age $T_{max}$ at which a content becomes obsolete. In other words, the age of a content increases until it becomes obsolete. Accordingly, age of content $S_{n}$, $n=1,\ldots,N$, evolves over time in the following way:
\small
\begin{align}
h_{n}(t+1)=& \max \{(1-d_{n}(t))(h_{n}(t)+1),T_{max}\}. \label{eq:age}
\end{align}
\normalsize
We denote the length-$N$ vector of ages associated with all the dynamic contents in the library by $\mathbf{h}(t)$.

\subsection{User behavior}
Let $\lambda(t)$ denote the number of users that request a content at time slot $t$. Whenever a user requests a content, the request  is first off-loaded to the SBS to be served. Users have different tolerance levels to the age of the contents they receive. Hence, we consider that, with probability $P_{redirect}^{(n)}(h)$, a user is not satisfied with the age $h$ of content $S_n$, and thus, it places another request for content $S_{n}$. In that case, the new user request is served directly by the MBS with a fresh content. Let $\lambda_n(t)$ be the number of users that request content $n$, $n=1,\ldots,N$, in time slot $t$, which is governed by the popularity profile $p_n$. The number of users that request content $n$ is split into two disjoint sets, where the first set of users are redirected to the MBS, while the second set consists of the users satisfied with the service provided by the SBS. We denote these numbers by $\lambda_{rn}(t)$, $\lambda_{an}(t)$, respectively. Note that $\lambda_{rn}(t)$ and $\lambda_{an}(t)$ are governed by the random process $P_{redirect}^{(n)}(h_n(t))$. 

Let $\boldsymbol{\lambda}_r (t)= (\lambda_{r1}(t),\ldots,\lambda_{rN}(t))$ be the vector associated with the number of redirected users for each content. We have the following:
\small
\begin{align}
&\sum^{N}_{n=1} \lambda_n(t) = \lambda(t),\\
\text{where,}\hspace{0.5cm}&\lambda_{nr}(t)+\lambda_{na}(t) = \lambda_n(t).
\end{align}
\normalsize
The corresponding expected values of these parameters are given as:
\small
\begin{align}
&\mathds{E}[\lambda_n|\lambda] = \lambda p_n,\\
&\mathds{E}[\lambda_{nr}|\lambda,h_n] = \lambda p_n P_{r}^{(n)}(h_n).
\end{align}
\normalsize
\subsection{Decision model and the problem formulation}
Let $C(\boldsymbol{\lambda}_r(t),\mathbf{d}(t))$ be the cost associated with serving the users redirected to the MBS at time $t$, and the backhaul cost associated with updating the contents, if there is any. In this work, we assume that this cost is linear in  $\lambda_{rn}$\footnote{For example, in OFDMA, a user re-directed to the MBS is assigned a subcarrier, and the power allocated to that subcarrier adds linearly to the energy cost.}. Hence, 
\small
\begin{align}
C(\boldsymbol{\lambda}_r(t),\mathbf{d}(t)) = \sum^{N}_{n=1} C_n(\lambda_{rn}(t),d_n(t))\label{eq:cost_function}.
\end{align}
\normalsize

Similarly, we define the back-haul cost function $C_{BH}(\mathbf{d}(t))$ which is also a linear function.
\small
\begin{align}
C_{BH}(\mathbf{d}(t)) = \sum^{N}_{n=1}C^{(n)}_{BH}(d_n(t)),
\end{align}
\normalsize
where $C^{(n)}_{BH}(d_n(t)) = d_n(t)\mathcal{E}_n$, with $\mathcal{E}_n$ being the average back-haul cost of updating application $n$.
If the SBS decides to update the content $n$ (or multiple contents), the age of the content is updated at the end of the time slot. Hence,  
\small
\begin{align}
C_n(\lambda_{rn}(t),d_n(t)) = \beta_n + \alpha_n \lambda_{rn}(t) + d_n(t)\mathcal{E}_n
\end{align}
\normalsize
Note that updating a content has an immediate cost which is larger than not-updating. However, the incurred extra cost in updating the content enables more users to be served at a local SBS.

We aim at minimizing the average total cost as follows:
\small
\begin{align}
\min_{\mathbf{d}(t)\in\{0,1\}^{N}} \lim_{T\rightarrow\infty}\frac{1}{T}\sum^T_{t=1} \mathds{E}\big[C(\boldsymbol{\lambda}_r(t),\mathbf{d}(t))\big]
\end{align}
\normalsize

\section{MDP formulation}
Define the state of the system to be $\mathbf{h}$. We denote by $V(\mathbf{h})$ the differential value function at state $\mathbf{h}$. The differential Bellman equations can be written as:
\small
\begin{align}
V(\mathbf{h}) + \mu^* = \min_{\mathbf{d}}V_{\mathbf{d}}(\mathbf{h}),\label{eq:V_vec}
\end{align}
\normalsize
where $\mu^*$ is the optimal average cost and $V_{\mathbf{d}}(\mathbf{h})$ is the differential action-value function defined by:
\small
\begin{align}
V_{\mathbf{d}}(\mathbf{h}) = \bar{C}(\mathbf{h},\mathbf{d}) + \mathds{P}(\acute{\mathbf{h}}|\mathbf{h},\mathbf{d})V(\acute{\mathbf{h}}),
\end{align}
\normalsize
where $\mathds{P}(\acute{\mathbf{h}}|\mathbf{h},\mathbf{d})$ is the transition probability from state $\mathbf{h}$ into $\acute{\mathbf{h}}$ when action $\mathbf{d}$ is taken which is governed by \eqref{eq:age}, and
\small
\begin{align}
\bar{C}(\mathbf{h},\mathbf{d}) = \sum_{\boldsymbol{\Lambda}}\mathds{P}(\boldsymbol{\lambda_r} = \boldsymbol{\Lambda})C(\boldsymbol{\lambda}_r,\mathbf{d})
\end{align}
\normalsize
The MDP associated with the average cost minimization problem can be solved by well known value iteration algorithm. However, the cardinality of state space (i.e., $(T_{max})^{ N}$) and action space (i.e., $2^N$) grow exponentially with the number of contents. Hence, the curse of dimensionality is the bottleneck for an efficient solution. To bypass this bottleneck, we note that the cost function in \eqref{eq:cost_function} is linear and the transition probabilities of each content does not affect the other. Hence, we can separate the value function in \eqref{eq:V_vec} into $N$ independent value functions each representing a distinct application. For each application $n$, we have
\small
\begin{align}
&V^{(n)}(h_n) + \mu^*_n = \min_{d_n} V^{(n)}_d(h_n)\\
&V^{(n)}_d(h_n) = \bar{C}_n(h_n,d_n) + d_n V^{(n)}(0) + (1-d_n)V^{(n)}(h_n+1) \label{eq:bellman_content}
\end{align}
\normalsize

We have developed a framework that has enabled distributed policies with respect to the individual contents. We will show that for each content, there exists a threshold policy on the age of the content for which it is optimal to update the content. The following lemma establishes the key property used to prove the structure of the optimal policy.

\begin{lemma}
The differential value function $V^{(n)}(h_n)$ for all $n=1,\ldots,N$ is non-decreasing with respect to the age of the content, $h_n$.
\end{lemma}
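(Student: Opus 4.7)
The plan is to prove monotonicity of $V^{(n)}$ via a standard value iteration argument, building up from the fact that the one-stage expected cost $\bar{C}_n(h_n,d_n)$ is non-decreasing in $h_n$. First I would verify the input monotonicity: since $\bar{C}_n(h_n,d_n) = \beta_n + \alpha_n \lambda p_n P_{r}^{(n)}(h_n) + d_n \mathcal{E}_n$, the cost is non-decreasing in $h_n$ provided the redirection probability $P_{r}^{(n)}(h)$ is non-decreasing in the age $h$. This is the natural modeling assumption (older content is more likely to be rejected by users), and I would state it explicitly at the start of the proof as the structural hypothesis underpinning the result.

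Next I would set up the relative value iteration recursion
\begin{align}
V_{k+1}^{(n)}(h_n) = \min_{d_n \in \{0,1\}} \Bigl\{ \bar{C}_n(h_n,d_n) + d_n V_k^{(n)}(0) + (1-d_n) V_k^{(n)}\bigl(\min\{h_n+1,T_{\max}\}\bigr) \Bigr\} - \mu_{n,k},
\end{align}
initialized with $V_0^{(n)} \equiv 0$, which is trivially non-decreasing in $h_n$. I would then proceed by induction on $k$: assuming $V_k^{(n)}$ is non-decreasing in $h_n$, I would consider the two action-specific terms separately. For $d_n=1$, the expression $\bar{C}_n(h_n,1) + V_k^{(n)}(0)$ is non-decreasing in $h_n$ because only the $P_r^{(n)}(h_n)$ term depends on $h_n$. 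For $d_n=0$, the expression $\bar{C}_n(h_n,0) + V_k^{(n)}(\min\{h_n+1,T_{\max}\})$ is the sum of two non-decreasing functions of $h_n$ (the cost term by assumption, the continuation term by the induction hypothesis together with the fact that $h_n \mapsto \min\{h_n+1, T_{\max}\}$ is non-decreasing). Since the pointwise minimum of two non-decreasing functions is non-decreasing, $V_{k+1}^{(n)}(h_n)$ inherits the property.

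Finally, I would invoke the convergence of the relative value iteration for average-cost MDPs on a finite state space (here $\{0,1,\ldots,T_{\max}\}$) with bounded costs: $V_k^{(n)} \to V^{(n)}$ pointwise (up to an additive constant), and pointwise limits preserve monotonicity. This yields the claim.

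The main obstacle is not the induction itself, which is routine, but rather stating cleanly what must be assumed about $P_r^{(n)}(\cdot)$ for the lemma to hold, and handling the boundary condition at $h_n = T_{\max}$ correctly so that the continuation state $\min\{h_n+1,T_{\max}\}$ in \eqref{eq:bellman_content} does not break monotonicity at the cap. A secondary concern is verifying that the differential Bellman equation has a solution to which value iteration converges; on a finite, communicating state space this is standard, but I would mention it to make the argument self-contained.
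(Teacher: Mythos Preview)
Your proposal is correct and follows essentially the same value-iteration-plus-induction argument as the paper; in fact you are more careful than the paper, which starts from an ``arbitrary'' $V_0^{(n)}$ (where the base case would fail) rather than your $V_0^{(n)}\equiv 0$, and which glosses over the monotonicity assumption on $P_r^{(n)}(\cdot)$ and the $T_{\max}$ cap that you rightly flag.
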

\begin{proof}
We use the value iteration algorithm to prove the lemma. We start by an arbitrary $V^{(n)}_0(h)$ differential value function and obtain the $k$-step differential value function $V^{(n)}_k(h)$ as follows:
\small
\begin{align}
&V^{(n)}_{k+1}(h) \nonumber\\
&= \min_{d}(-\mu^*_n + \bar{C}_n(h_n,d_n) + d_n V^{(n)}_k(0) + (1-d_n)V^{(n)}_k(h+1))\label{eq:V_k}
\end{align}
\normalsize
Note that $\lim_{k\rightarrow\infty}V^{(n)}_k(h)=V^{(n)}(h)$. The proof is by induction. For $k=1$, $V^{(n)}_{1}(h) = \min_{d_n}(-\mu^*_n + \bar{C}_n(h_n,d_n) + d_n V^{(n)}_0(0) + (1-d_n)V^{(n)}_0(h_n+1))$ which is the minimum of two non-decreasing functions, and thus, itself is a non-decreasing function in $h_n$. Assume that the lemma holds for $k$. Then according to \eqref{eq:V_k}, $V^{(n)}_{k+1}(h_n)$ is also a non-decreasing function with respect to $h_n$. By letting $k\rightarrow\infty$, we conclude the proof by showing that $V^{(n)}(h_n)$ is also a non-decreasing function in $h_n$.
\end{proof}
The lemma is intuitively clear considering the non-decreasing property of the cost functions and Bellman equations in \eqref{eq:bellman_content}.

\begin{theorem}
For each content the optimal policy minimizing the  average cost is a threshold policy.
\end{theorem}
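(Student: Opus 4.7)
The plan is to reduce the optimality condition at state $h_n$ to a scalar inequality whose left-hand side is monotone in $h_n$ by the preceding lemma, while the right-hand side is independent of $h_n$; this will immediately yield a threshold structure.

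First I will expand the per-content Bellman equation \eqref{eq:bellman_content}. Since the expected immediate cost is
\[
\bar{C}_n(h_n,d_n) = \beta_n + \alpha_n\, p_n \lambda\, P_{redirect}^{(n)}(h_n) + d_n \mathcal{E}_n,
\]
the value-difference between the two actions at state $h_n$ is
\[
V^{(n)}_{1}(h_n) - V^{(n)}_{0}(h_n) = \mathcal{E}_n + V^{(n)}(0) - V^{(n)}(h_n+1),
\]
because $\bar C_n(h_n,1)-\bar C_n(h_n,0)=\mathcal{E}_n$ does not depend on $h_n$. Hence the update action ($d_n=1$) is optimal at state $h_n$ if and only if
\[
V^{(n)}(h_n+1) - V^{(n)}(0) \;\geq\; \mathcal{E}_n.
\]

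Next I will invoke the lemma: $V^{(n)}(\cdot)$ is non-decreasing in its argument, so the map $h_n \mapsto V^{(n)}(h_n+1)-V^{(n)}(0)$ is also non-decreasing. Therefore the set
\[
\mathcal{U}_n \;=\; \bigl\{h_n \in \{0,1,\ldots,T_{\max}\} : V^{(n)}(h_n+1) - V^{(n)}(0) \geq \mathcal{E}_n\bigr\}
\]
is an upper set, i.e., there exists $h^*_n \in \{0,1,\ldots,T_{\max}\}$ (possibly $T_{\max}$ if the SBS never finds it worthwhile to refresh, or $0$ if it always does) such that $\mathcal{U}_n = \{h_n : h_n \geq h^*_n\}$. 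This is exactly the definition of a threshold policy for content $n$, and repeating the argument for each $n$ completes the proof.

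I do not foresee a genuine obstacle, since the monotonicity supplied by the lemma together with the fact that $\mathcal{E}_n$ is independent of $h_n$ does all the work. The only delicate point to flag explicitly is handling the boundary state $h_n = T_{\max}$ where \eqref{eq:age} caps the age; there I will note that the comparison above still applies after interpreting $V^{(n)}(T_{\max}+1)$ via the capped transition, so the threshold $h^*_n$ is well defined inside the finite set $\{0,\ldots,T_{\max}\}$.
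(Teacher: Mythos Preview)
Your argument is correct and follows essentially the same route as the paper: both use the non-decreasing property of $V^{(n)}(\cdot)$ established in the lemma to conclude a threshold structure. The paper's own proof merely invokes the monotonicity and cites a textbook for the standard implication, whereas you spell out the explicit comparison $V^{(n)}(h_n+1)-V^{(n)}(0)\geq \mathcal{E}_n$ and show the optimal-update set is an upper set; so your version is a more detailed instantiation of the same idea rather than a different approach.
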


\begin{proof}
The monotonicity of the differential value functions prove the optimality of the threshold policy \cite[Chapter~7]{book:dp1}. Intuitively, due to the non-decreasing property of the  $V^{(n)}(h_n)$, at some age, it would be optimal to update the content. Since the differential value function is non-decreasing, a larger, or smaller age would not be able to yield a smaller average cost. 
\end{proof}

\section{Learning Content Popularity and Age Tolerance}
In the previous section, we showed that the problem is separable and thus, the optimization can be performed for each content separately. Second, we proved that the policy minimizing the cost is a threshold policy. Hence, the SBS by monitoring the age of the contents individually, needs to optimize according to a single threshold for each content. Under the threshold policy, the age of a content increases linearly until it reaches the threshold wherein the content will be updated and the age will refresh to a value of zero. Thus the minimum cost associated with content $n$ is the solution of:
\small
\begin{align}
\mu^*_n =\min_{H_n} \frac{1}{H_n+1}\bigg(\sum^{H_n}_{h=0}\bar{C}_n(h,0)+\mathcal{E}_n\bigg), \label{eq:opt_cost}
\end{align}
\normalsize
where
\small
\begin{align}
\bar{C}_n(h,0)=\beta_n + \alpha_n \lambda p_n P^{(n)}_{redirect}(h).
\end{align}
\normalsize
Considering the linearity of the cost functions, the average cost optimization becomes:
\small
\begin{align}
\min_{H_n}\left\{ \beta_n + \frac{\mathcal{E}_n + \alpha_n p_n \sum^{H_n}_{h=0}P^{(n)}_{redirect}(h)}{H_n + 1}\right\}.
\end{align}
\normalsize
The equivalent optimization problem depends on the redirection probabilities $P^{(n)}_{redirect}(h)$, that are unknown. Hence, in the following we resort to reinforcement learning methods to infer the redirection probabilities. 

We consider a sequential learning framework in which the SBS at each iteration of the learning algorithm faces choosing a threshold $0\leq H_{n}\leq T_{max}$. After choosing the threshold, the SBS will observe a random cost associated with its decision;
\small
\begin{align}
\hat{C}_n(H_n) = \frac{\mathcal{E}_n + \sum^{H_n}_{t=1}C_n(\lambda_{rn}(t),0)}{H_n + 1}
\end{align}
\normalsize
The learning algorithm should provide the SBS a method to adjust its strategy by observing the outcomes of its decision. This resembles the well-known multi-armed bandit (MAB) problem. In MAB, each action (i.e., thresholds) has an expected return value which is called \emph{value} of that action. We denote the true value of action $H_n$ by $q_n(H_n) = \frac{1}{H_n+1}\bigg(\sum^{H}_{h_n=0}\bar{C}_n(h,0)+\mathcal{E}_n\bigg)$. If the agent (i.e., SBS) knows $q(.)$ values, then it can simply choose the action with the minimum expected cost. Thus, we need an algorithm that can learn $q_n(.)$ values. A well-studied algorithm for learning those values is the $\epsilon$-greedy algorithm \cite{sutton} which starts by an arbitrary estimate, $Q_n(.)$ about the value of the actions and interacts with the environment to update its initial estimates, eventually converging to the true estimates. Two critical phases associated with $\epsilon$-greedy algorithm is the exploitation and exploration stages. The agent utilizing the estimates greedily chooses an action, and thus, it exploits what it knows already. Meanwhile, if it chooses an action completely random regardless of the estimates we say that it explores. Exploitation is necessary to act upon the experience while exploration helps to improve the estimate values and it facilitates convergence to the true action values. The $\epsilon$-greedy algorithm is presented in Algorithm \ref{alg:greedy}.

\begin{algorithm}\footnotesize{
\caption{$\epsilon$-greedy}\label{alg:greedy}
\begin{algorithmic}[1]
\For{$i = 1,2,\ldots$}
\State $
H_n\leftarrow\Bigg\{
\begin{array}{ll}
         \arg\max_{H_n} Q_n(H_n) & \mbox{with probability $1-\epsilon$},\\
        \mbox{random action} & \mbox{with probability $\epsilon$} .
        \end{array}
$
\State Apply $H_n$ and observe $\hat{C}_n(H_n)$
\State \small$Q_n(H_n)\leftarrow (1-\zeta)Q_n(H_n) + \zeta\hat{C}_n(H_n)$\normalsize
\EndFor
\end{algorithmic}}
\end{algorithm}

\section{Numerical Results}
In this section, we aim at evaluating the performance of the $\epsilon$-greedy algorithm in finding the optimal thresholds that minimize the total cost of the system. Due to the separability, we  consider only one content and we note that the learning processes for all the contents are the same. The popularity of the contents are modeled by a Zipf distribution with an exponent of $1.1$. We assume that on average a  given user becomes dissatisfied with a content of age $h$ with probability of $e^{-0.4h}$. Users arrive at the system according to a Poisson distribution with rate $100$ users per time slot. The cost of re-directed users to MBS is $C_n(\lambda_{rn}(t)) = 10\cdot\lambda_{rn}(t)$ and the backhaul cost is assumed to be $500$. In Figure \ref{fig:regret1}, we illustrate the performance of the $\epsilon$-greedy  algorithm for $\epsilon=0,\ 0.05,\ 0.1$ by adopting average regret as the metric. The regret of a learning algorithm is defined to be the difference between the cost achieved by the learner and the optimal cost. Here, we obtain the optimal cost by assuming that $P_{redirect}(h)$ is known, and by numerically solving \eqref{eq:opt_cost}. Note that the estimates of the action values, $Q(H)$, is initialized to be $0$ for all $H=0,\ldots,T_{max}$. Note also that, the greedy algorithm seems to achieve a better performance even if it always exploits. This is not too surprising considering that the action-values are initialized opportunistically (i.e., the initial costs are believed to be zero by the agent). At the beginning, i.e., $t=0$, the greedy algorithm believes that every action returns a value of zero. However, by trying each action it gets disappointed in that action and tries the rest. In other words the estimates are biased. Opportunistic initialization is a simple method to incentivize exploration. However, it can only happen once and at the beginning of time. This method quickly fails in non-stationary environments. 
 \begin{figure}[h]
  \centering
    \includegraphics[scale=.4]{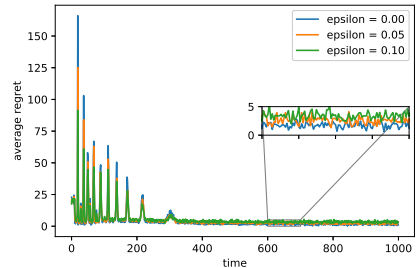}
		  \caption{Average regret of the $\epsilon$-greedy algorithm for $\epsilon=0,0.05,0.1$ in a stationary environment.}
			\label{fig:regret1}
\end{figure}

\begin{figure}
\includegraphics[scale=0.4]{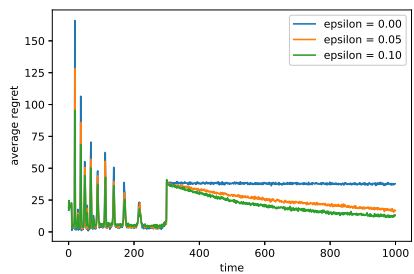}
\caption{Average regret of the $\epsilon$-greedy algorithm for $\epsilon=0,0.05,0.1$ in a non-stationary environment.}
\label{fig:regret2}
\end{figure}
To show this, we also study the performance of the $\epsilon$-greedy algorithm in a time varying environment. We assume that at $t=300$ the backhaul cost decreases to a value of $400$. The results are depicted in Figure \ref{fig:regret2}. We can see that the greedy algorithm cannot adapt to the non-stationary environment and it gets stuck in a sub-optimal threshold. Meanwhile, for $0.05$ and $0.1$-greedy algorithm, it is able to adapt to the environment thanks to their exploration strategy. A large value of $\epsilon$ results in more exploration, and thus, we can see that $0.1$-greedy algorithm has a faster decay in terms of the average regret compared to the $0.05$-greedy algorithm. However, note that $\epsilon$-greedy algorithm is expected to be at least $\epsilon$ away from the optimal cost. Thus, there  is a trade-off between the rate of convergence and the value of convergence.

\section{Conclusion}
In this work, we developed a framework for a cost minimization problem in a dynamic content caching setting. Specifically, we aimed at striking a balance between the number of unsatisfied users whom are redirected to MBS and the cost of accessing the backhaul link by SBS in updating the dynamic contents. We formulated the cost minimization problem as an MDP and showed that the problem is separable with respect to the contests. Subsequently, we proved that a threshold policy in the age of the contents is optimal. In finding the optimal thresholds, we resorted to learning algorithms since users' preferences are not known and vary with each content. To that extend, we represented the problem in MAB framework and through numerical results, we showed that it is possible to make the expected regret of the learning algorithm arbitrarily close to zero. The learning algorithm even shows adaptability to non-stationary settings. As a future work, we aim to investigate the system with non-linear cost functions. Under the non-linear cost functions, the problem is no longer separable and new solution methods needs to be investigated. As an extension of this work we will also analyze the heterogeneous cellular network architecture with energy harvesting SBSs.

\bibliographystyle{IEEEtran}
\bibliography{ref}
\end{document}